\date{February 26, 2018}
\def\const{c}
\newcommand{\dbar}{{\mathchar'26\mkern-12mu\mathrm{d}}}
\def\cE{\mathcal{E}}
\def\cH{\mathcal{H}}
\def\gK{\mathfrak{K}}
\def\gR{\mathfrak{R}}
\def\gu{\mathfrak{u}}
\def\arctg{\mathrm{arctg}}
\newcommand{\rz}{\mathbb{R}}
\newcommand{\rd}{\mathrm{d}}
\def\gtf{\gamma_\mathrm{TF}}
\newtheorem{lemma}{Lemma} \newtheorem{theorem}{Theorem}
\title[Negative Ions]{The Maximal Negative Ion of the Time-Dependent
  Thomas-Fermi and the Vlasov Atom}
\author[L. Chen]{Li Chen} \address{Institut f\"ur Mathematik\\
  Universit\"at Mannheim\\ A5 6\\ 68131\\ Germnay}
\email{chen@math.uni-mannheim.de}
\author[H. Siedentop]{Heinz Siedentop} \address{Mathematisches
  Institut\\ Ludwig-Maximilians-Universi\"at M\"unchen\\
  Theresienstra\ss e 39\\ 80333 M\"unchen\\ Germany}
\email{h.s@lmu.de}
\begin{document}

\maketitle
\begin{abstract}
  We show an atom of atomic number $Z$ described by the time-dependent
  Thomas-Fermi equation or the Vlasov equation cannot bind more than
  $4Z$ electrons.
\end{abstract}
\section{Introduction and Statement of the Result}
\subsection{Known Results on the Excess Charge of Atoms}

Experimentally no doubly charged negative ions -- or ions that are
even more negative -- are known (Massey
\cite{Massey1976,Massey1979}). To prove this simple fact starting from
a mathematical model of the atom is called the excess charge
problem. Here the excess charge $Q(Z)$ refers to the maximal total
number of electron $N$ minus the nuclear charge $Z$. A step in this
direction was taken by Hill \cite{Hill1977,Hill1977P,Hill1980} in the
context of the Schr\"odinger equation showing that the $H^-$-ion has
only one bound state. First results on $Q$ itself were obtained by
Ruskai \cite{Ruskai1982,Ruskai1982II} and Sigal \cite{Sigal1982}
showing that atoms cannot be arbitrarily negative; later Lieb et al
\cite{Liebetal1984,Liebetal1988} showed that the excess charge is
asymptotically of lower order than $Z$, i.e., $Q(Z)/Z\to0$ as
$Z\to\infty$.  All of these results where obtained for Schr\"odinger
operators and are asymptotic for large $Z$.

For approximate models the results are -- not unexpected --
stronger. Solovej showed first for the reduced Hartree-Fock model
(Hartree-Fock without exchange term) and later for the full
Hartree-Fock model that excess charge $Q(Z)$ is uniformly bounded in
$Z$ (Solovej \cite{Solovej1991,Solovej2003}). These were big steps
forward, however, they are still short of the above mentioned fact,
that the observed excess charge is at most one, since no control on
the constant is offered.

For density functionals the situation is better. That there are
no negative ions in Thomas-Fermi theory is folklore (Gombas
\cite{Gombas1949}). This can be easily shown using a subharmonic
estimate (see Lieb and Simon \cite{LiebSimon1977}). Benguria and Lieb
\cite{BenguriaLieb1985} showed that the Thomas-Fermi-Weizs\"acker atom
can have an excess charge that does not exceed $0.7335$ (where this
numeric values holds for the coupling constant of the Weizs\"acker
term that reproduces the Scott conjecture).

Finally we wish to mention an unpublished discovery of Benguria in
Thomas-Fermi theory. He realized that multiplying the Thomas-Fermi
equation by $|x|\rho$ and integrating leads to the inequality
$$Q(Z)<Z$$
on the Thomas-Fermi excess charge. Of course this is of limited value
in TF theory, since, as mentioned above, the excess charge is
zero. However, the value of the idea is that it can be transferred and
extended to other situations. In fact, it was Lieb \cite{Lieb1984} who
realized this for the Schr\"odinger operator with and without magnetic
field and the Chandrasekhar operator. Lieb showed, among other things,
that for these operators $Q(Z)<Z+1$.  This bound holds regardless of
the symmetry under permutations (Boltzons, Bosons, or Fermions) and is
-- for large $Z$ worse than the asymptotic bounds mentioned
above. However, the bound is non-asymptotic and proves the ionization
conjecture for $Z=1$ (hydrogen).

Benguria's idea works also for the Hellmann functional, the
Hellmann-Weizs\"acker functional \cite{Benguriaetal1992}, the
Hartree-Fock functional and others. There are, however, functionals
that resisted such a treatment, like the
Thomas-Fermi-Dirac-Weizs\"acker functional and the M\"uller
functionals and variants thereof which were treated only recently by
different means (Frank et al \cite{Franketal2016,Franketal2018}, Kehle
\cite{Kehle2017}).  Moreover, there are models which are still
completely open like the no-pair operators of relativistic quantum
mechanics (Sucher \cite{Sucher1980}) where it is not even known that
the excess charge is finite.

The models mentioned so far were all treated in a stationary
setting. Although a time-dependent characterization of bound and
scattering states exists also in quantum mechanics (see Perry
\cite[Theorem 2.1]{Perry1983}), it took eighty-seven years after the
advent of quantum mechanics to approach the problem in a
time-dependent way (Lenzmann and Lewin
\cite{LenzmannLewin2013}). However, it turns out that this treatment
-- and therefore also ours since we will follow Lenzmann and Lewin as
closely as possible in the non-linear equation treated here -- is also
a variant of Benguria's original idea.

An improvement of Lieb's non-asymptotic result was obtained by Nam
\cite{Nam2012N} who showed
$$\alpha_N(N-1)\leq Z(1+0.68N^{-2/3})$$
with
$$\alpha_N:= \inf_{x_1,...,x_N\in\rz^3}
{\sum_{1\leq i<j\leq N}{|x_i|^2+|x_j|^2\over|x_i-x_j|}\over(N-1)\sum_{i=1}^N|x_i|}.$$

In this paper, we wish to discuss the excess charge problem of atoms
when described by the Vlasov equation \cite{Vlasov1938,Vlasov1968} and the
time-dependent Thomas-Fermi equation \cite{Bloch1933}. The latter is a
hydrodynamic one with a pressure-density relation given by
Thomas and Fermi.

\subsection{The Vlasov Equation}

The Vlasov equation, originally derived in plasma physics, can also be
used as an effective equation for the spin summed phase space density
$$
f:\rz\times \rz^{6}\to \rz_+
$$
of fermions.  If possible time dependence of the density is indicated
by a subscript $t$, i.e., the functions $f_t$ are interpreted as the
spin-summed phase space density of electrons at time $t$, at position
$x$ and momentum $\xi$. The Pauli principle for fermions with $q$ spin
states each (for electrons $q=2$) is implemented by the requirement
\begin{equation}\label{Linfty}
f_t\leq q.
\end{equation}
We will work in atomic units in which the rationalized Planck constant
$\hbar$ and the mass $m$ of the electron are one, in particular we have
$h=2\pi$. Following Planck one requires that each particle occupies
the volume $h^3$ in phase space, i.e., we interpret
$$\rho_t(x) := \int_{\rz^3} \dbar\xi f_t(x,\xi)$$
(with $\dbar \xi:=\rd \xi/h^3$) as the density of electrons at
position $x$ at time $t$ and
\begin{equation}\label{L1}
N(t):=\int_{\rz^3}\rd x \rho_t(x),
\end{equation}
as the number of particles at time $t$ which may or may not be finite.

We are interested in the maximal number of electrons which a nucleus
of charge $Z$ can bind. For the moment, though, we allow for arbitrary
many nuclei. The electric potential $V_\mathrm{tot}$ of $K$ nuclei at
positions $\gR_1,...,\gR_K$ with nuclear charges $Z_1,...,Z_K$ is
\begin{equation}
  \label{eq:potential}
  V_\mathrm{tot} := V - V_\mathrm{MF}:=\sum_{k=1}^KZ_k\delta_{\gR_k}*|\cdot|^{-1} -\rho*|\cdot|^{-1}.
\end{equation}
The force $\gK$ is
$$ \gK(x) := \nabla V_\mathrm{tot}(x)= -\sum_{k=1}^KZ_k{x-\gR_k\over |x-\gR_k|^3} 
+ \int_{\rz^3}\rd y\rho_t(y) {x-y\over|x-y|^3}.$$
Thus the Vlasov equation reads
\begin{equation}
  \label{VPN}
  \partial_t f_t + p\cdot \nabla_x f_t + \gK\cdot \nabla_\xi f_t  =0.
\end{equation}
 
For transparency we will assume for our main results that we are in
the atomic case, i.e., $K=1$, and $Z:=Z_1$, $\gR_1=0$. Using
homogeneity in the spirit of Benguria's idea, it is clear that
multiplying by a homogeneous function of degree one might be a hopeful
strategy; however, instead of multiplying simply by $x$, multiplying
by $x\cdot\xi|x|$ helps in dealing with the derivative with respect to
$\xi$. Because of the time-dependence, the obvious idea would be to
cut-off at an arbitrary distance $R$, integrate, and then take
$t\to\infty$. However because of technical reasons, a sharp cut-off
leads to an indefinite term later on. A suitable soft cut-off solves
this problem. And instead of taking $t$ large we will average over all
times. We will follow Lenzmann and Lewin \cite{LenzmannLewin2013} and
pick as test function
\begin{equation}
  \label{test}
  w_R:=\nabla g_R\cdot\xi,\ g_R(x) := R^3 g(|x|/R),\ g(r) = r-\arctg(r).
\end{equation}
We will show that the two potential terms will yield the wanted
estimate whereas the other terms of the equation vanish or can be
dropped.

\begin{theorem}
  \label{maxion}
  Assume $f_t$ to be a weak solution of the Vlasov equation
  \eqref{VPN} of finite energy \eqref{Energy}, assume $B\subset\rz^3$
  bounded and measurable, and set
  $$ N_V(t,B):= \int_{\rz^3} \dbar\xi \int_B\rd xf_t(x,\xi)$$
  which is the number of electrons in $B$.  Then in temporal average
  for large time $N_V(t,B)$ does not exceeds $4Z$, i.e.,
  \begin{equation}
    \label{behauptung}
    \limsup_{T\to\infty} {1\over T}\int_0^T\rd t N_V(t,B) \leq 4Z.
  \end{equation}
\end{theorem}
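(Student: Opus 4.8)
The plan is to test the Vlasov equation \eqref{VPN} against the function $w_R$ of \eqref{test}, integrate over phase space, and read off the excess-charge bound from the two Coulomb terms after averaging in time. Writing $p=\xi$ for the momentum and using $\partial_t\int w_R f_t=\int w_R\,\partial_t f_t$ together with \eqref{VPN}, two integrations by parts (in $x$ against the transport term, in $\xi$ against the force term; no boundary terms arise since $\xi$ is independent of $x$, $\gK$ of $\xi$, and $\nabla_\xi w_R=\nabla g_R$) give the identity
\begin{equation}\label{eq:identity}
  \frac{\rd}{\rd t}\int\rd x\,\dbar\xi\,w_R f_t
  = \int\rd x\,\dbar\xi\,\big(\xi\cdot\nabla^2 g_R(x)\,\xi\big)f_t
  + \int\rd x\,\rho_t(x)\,\gK(x)\cdot\nabla g_R(x).
\end{equation}
I would call the three terms $A'(t)$, $K(t)$, $P(t)$, and split $P(t)$ according to $\gK=\nabla V-\nabla V_\mathrm{MF}$ into a nuclear part and an electron--electron part.

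Everything then hinges on the convexity of $g_R$: since $g(r)=r-\arctg(r)$ is convex and increasing, so is the radial function $g_R$, whence $\nabla^2 g_R\ge0$ and $K(t)\ge0$. For the nuclear part, a direct computation with $\nabla g_R(x)=|x|\,x/(1+|x|^2/R^2)$ gives $\nabla V\cdot\nabla g_R=-Z/(1+|x|^2/R^2)$, so this part equals $-Z\,\Xi_R(t)$ with $\Xi_R(t):=\int\rd x\,\rho_t(x)/(1+|x|^2/R^2)$. For the electron part I would symmetrise,
\begin{equation}\label{eq:rep}
  I_R(t):=\tfrac12\int\rd x\,\rd y\,\rho_t(x)\rho_t(y)\,
  \frac{(x-y)\cdot(\nabla g_R(x)-\nabla g_R(y))}{|x-y|^3},
\end{equation}
which is nonnegative by monotonicity of $\nabla g_R$. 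The crucial quantitative input is the kernel bound
\begin{equation}\label{eq:kernel}
  (x-y)\cdot(\nabla g_R(x)-\nabla g_R(y))
  \ \ge\ \tfrac12\,\frac{|x-y|^3}{(1+|x|^2/R^2)(1+|y|^2/R^2)},
\end{equation}
which, after passing to radial variables $|x|,|y|,\cos\theta$, reduces to the elementary facts $|x-y|\le|x|+|y|$ and $(|x|-|y|)^2\ge0$; inserting \eqref{eq:kernel} into \eqref{eq:rep} yields $I_R(t)\ge\tfrac14\Xi_R(t)^2$.

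With these facts \eqref{eq:identity} reads $A'(t)=K(t)-Z\Xi_R(t)+I_R(t)\ge\tfrac14\Xi_R(t)^2-Z\Xi_R(t)$. I would integrate this over $[0,T]$ and divide by $T$. The boundary contribution $T^{-1}(A(T)-A(0))$ tends to zero because $|\nabla g_R|\le R^2$ makes $A(t)$ bounded uniformly in $t$ by the finite-energy assumption. Dropping $K\ge0$, writing $M_T:=T^{-1}\int_0^T\Xi_R\,\rd t$ and using Cauchy--Schwarz in the form $M_T^2\le T^{-1}\int_0^T\Xi_R^2\,\rd t$, one obtains $\tfrac14 M_T^2\le Z M_T+o(1)$, so that the quadratic forces $\limsup_{T\to\infty}M_T\le 4Z$. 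Finally, for $B\subset\{|x|\le\rho_B\}$ one has $\Xi_R(t)\ge(1+\rho_B^2/R^2)^{-1}N_V(t,B)$, whence $\limsup_{T\to\infty}T^{-1}\int_0^T N_V(t,B)\,\rd t\le 4Z(1+\rho_B^2/R^2)$; letting $R\to\infty$ proves \eqref{behauptung}.

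The conceptual heart is the interplay of the convexity of $g_R$ with the kernel bound \eqref{eq:kernel}, but that step is elementary once the test function \eqref{test} has been chosen. The main obstacle I anticipate is analytic rather than algebraic: justifying \eqref{eq:identity} for a mere weak solution, and controlling every integral when the total mass $N(t)$ may be infinite. This requires checking that the Coulomb double integral in \eqref{eq:rep} converges absolutely, that the integrations by parts produce no contributions at spatial or momentum infinity, and that $\Xi_R(t)$ is finite with $A(t)$ bounded in $t$ -- all of which I expect to follow from the finite-energy assumption and the Pauli bound \eqref{Linfty}, but which must be carried out carefully for the soft cut-off $g_R$. The passage to a temporal $\limsup$ is precisely what lets one avoid assuming any convergence of $f_t$ as $t\to\infty$.
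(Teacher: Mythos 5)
Your proposal is correct and follows essentially the same route as the paper: the same test function $w_R=\nabla g_R\cdot\xi$, the same three-term identity in which convexity of $g_R$ makes the transport term have a sign, the same symmetrized Coulomb term bounded below via the Lenzmann--Lewin kernel inequality \eqref{ll1} (scaled to $g_R$), and the same Jensen-plus-quadratic-inequality finish with $R\to\infty$ at the end. The only (harmless) deviation is that you sketch an elementary proof of that kernel bound -- which indeed reduces, after passing to $r=|x|$, $s=|y|$, $c=\cos\theta$, to $\tfrac12(r-s)^2+r^2s^2(1-c)\geq0$ together with $|x-y|\leq|x|+|y|$ -- whereas the paper simply cites it as Lemma \ref{ll}, and both treatments defer the same regularization issues for weak solutions.
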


To interpret the result we introduce the following notation: we say
that $f_t \in L^1(\rz^3_x\times \rz^3_\xi)$ with $0\leq f_t\leq q$
a.e., is a fermionic bound state, if the functions $f_t$ fulfill the
following: for any $\epsilon>0$ exists a radius $R$ such that for all
times $t\geq0$
$$\int_{|x|>R} \rd x \dbar \xi f_t(x,\xi) < \epsilon.$$
Thus the theorem implies that the Vlasov equation has no bound state
which has more than $4Z$ electrons.

We remark, that it is obvious from the proof that the right side of
\eqref{behauptung} can be improved to $2Z$, if the spatial density
$\rho_t$ of the solution is radially symmetric.

Benguria's idea suggests to multiply the Vlasov equation with a
homogeneous function of degree one followed by a sharp cut-off and
integrate. Instead, to deal with the partial derivatives in $x$ and $\xi$, we
choose the weight $w$ given by
 \begin{equation}
   \label{w}
   w_R(x,\xi):= \nabla g_R(x)\cdot\xi= {|x|\over 1+(x/R)^2}x\cdot\xi,
 \end{equation} i.e.,
 $w(x)= x|x|\cdot\xi +O(|x|^3)$ for small $x$ and is bounded for large
 $|x|$ and fixed $R$ and $\xi$.
 
 To our knowledge no such result is known neither for the
 time-dependent Thomas-Fermi equation nor for the Vlasov
 equation. However, the analogue result for the Schr\"odinger equation
 was shown by Lenzmann and Lewin \cite{LenzmannLewin2013} whose proof
 we will follow as closely as possible. 

 \subsection{The Time-Dependent Thomas-Fermi Equation}
 The time dependent Thomas-Fermi equation (Bloch \cite{Bloch1933}), see
 also Gombas \cite{Gombas1949}), for electrons in the field of a
 nucleus $Z$ reads
 \begin{eqnarray}
   \label{eq:gtf}
   \partial_t\varphi_t=\frac12(\nabla \varphi_t)^2 +\int{\rd p\over\rho_t}
   - {Z\over|x|} + \rho_t*|\cdot|^{-1}  
 \end{eqnarray}
 supplemented by the continuity equation
 \begin{equation}
   \label{eq:kon}
   \partial_t\rho_t=\nabla(\rho_t\nabla\varphi_t).
 \end{equation}
 Here $\varphi$ is the potential of the velocity field $\gu$, i.e.,
 $\gu=-\nabla \varphi$, $\rho$ is the density of electrons, and $p$ is
 the pressure as a function of $\rho$. The Thomas-Fermi choice for $p$
 is $p(\rho):=(1/5)\gtf \rho^{5/3}$ where $\gtf:=(6\pi^2/q)^{2/3}$,
 i.e., we have
 \begin{equation}
   \label{eq:tf}
   \partial_t\varphi_t=\frac12(\nabla \varphi_t)^2 + \frac\gtf2\rho_t^{2/3}
   - {Z\over|x|} + \rho_t*|\cdot|^{-1}.
 \end{equation}

 Our result is
 \begin{theorem}
   \label{th:tf}
   Assume that $\varphi_t$ and $\rho_t$ is a weak solution of
   \eqref{eq:tf} and \eqref{eq:kon} with finite energy
   \eqref{eq:ttfe}, assume $B\subset\rz^3$ bounded and measurable, and
   set
  $$ N_\mathrm{TF}(t,B):= \int_B\rd x\rho_t(x)$$
  which is the number of electrons in $B$.  Then, in temporal average
  for large time, this does not exceed $4Z$, i.e.,
  \begin{equation}
    \label{behauptung-tf}
    \limsup_{T\to\infty} {1\over T}\int_0^T\rd t N_\mathrm{TF}(t,B) \leq 4Z.
  \end{equation}
 \end{theorem}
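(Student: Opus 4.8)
\emph{Proof strategy.} The plan is to reduce Theorem~\ref{th:tf} to the computation already underlying the Vlasov case (Theorem~\ref{maxion}) by passing to the hydrodynamic (compressible Euler) form of \eqref{eq:tf}--\eqref{eq:kon}. Writing $\gu_t=-\nabla\varphi_t$ for the velocity field and $p(\rho)=\tfrac15\gtf\rho^{5/3}$ for the Thomas--Fermi pressure, I would first take the gradient of \eqref{eq:tf} and combine it with the continuity equation \eqref{eq:kon} to obtain the momentum balance
\[
  \partial_t(\rho_t\gu_t)+\nabla\cdot(\rho_t\,\gu_t\otimes\gu_t)=-\nabla p(\rho_t)+\rho_t\,\gK ,
\]
with $\gK=\nabla V_\mathrm{tot}$ exactly as in the Vlasov setting. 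This identifies the density $\rho_t$ and the force $\gK$ of the two problems; the only difference is that the Vlasov second moment $\int\dbar\xi\,\xi\otimes\xi\,f_t$ is here replaced by the stress tensor $\rho_t\,\gu_t\otimes\gu_t+p(\rho_t)\,\mathrm{Id}$.

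Next I would use the \emph{same} weight $g_R$ from \eqref{test} and study the functional
\[
  I_R(t):=\int_{\rz^3}\rho_t\,\nabla g_R\cdot\gu_t\,\rd x=-\int_{\rz^3}\rho_t\,\nabla g_R\cdot\nabla\varphi_t\,\rd x ,
\]
the hydrodynamic analogue of $\int\dbar\xi\int\rd x\,w_R f_t$. Differentiating in time, inserting the momentum balance, and integrating by parts in $x$ gives
\[
  \frac{\rd}{\rd t}I_R(t)=\int_{\rz^3}\rho_t\,\gu_t\cdot(\mathrm{Hess}\,g_R)\gu_t\,\rd x+\int_{\rz^3}p(\rho_t)\,\Delta g_R\,\rd x+\int_{\rz^3}\rho_t\,\nabla g_R\cdot\gK\,\rd x .
\]
Since $g(r)=r-\arctg(r)$ is convex with $g'(r)=r^2/(1+r^2)$, the radial function $g_R$ is convex, so $\mathrm{Hess}\,g_R\succeq0$ and $\Delta g_R\ge0$; together with $\rho_t\ge0$ and $p(\rho_t)\ge0$ this makes the first two terms nonnegative. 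These are the "other terms'' that can be dropped, and jointly they are exactly the positive contraction $\int\mathrm{Hess}\,g_R:(\rho_t\gu_t\otimes\gu_t+p(\rho_t)\mathrm{Id})\,\rd x$ playing the role of the positive Vlasov kinetic term $\int\mathrm{Hess}\,g_R:\int\dbar\xi\,\xi\otimes\xi f_t\,\rd x$.

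Then I would average in time and let $T\to\infty$. The left-hand side contributes $\bigl(I_R(T)-I_R(0)\bigr)/T$, which vanishes once $I_R$ is bounded uniformly in $t$; this is where the finite-energy hypothesis \eqref{eq:ttfe} enters, controlling $\int\rho_t|\gu_t|^2$ and hence $I_R$ through $|\nabla g_R|\le R^2$. Dropping the two nonnegative terms leaves
\[
  \limsup_{T\to\infty}\frac1T\int_0^T\!\!\rd t\int_{\rz^3}\rho_t\,\nabla g_R\cdot\gK\,\rd x\le0 .
\]
The integrand here is \emph{identical} to the force term in the proof of Theorem~\ref{maxion} (same $\rho_t$, same $\gK$, same $g_R$): splitting $\gK$ into its nuclear part, which contributes $-ZR^2\int\rho_t/(R^2+|x|^2)\,\rd x$, and its mean-field part, which after symmetrization in $x,y$ is nonnegative by convexity of $g_R$, and then letting $R\to\infty$ produces $\limsup_{T\to\infty}\tfrac1T\int_0^T N_\mathrm{TF}(t,B)\,\rd t\le4Z$ exactly as there (and $2Z$ in the radial case).

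The main obstacle I expect is not the algebra but the analytic rigor for weak solutions: justifying the two integrations by parts, the differentiation of $I_R$ under the integral sign, and above all the uniform-in-$t$ bound on $I_R$ that makes the boundary term disappear under temporal averaging. The one genuinely new structural point compared with the Vlasov proof — and the reason the same constant survives — is that the Thomas--Fermi pressure enters only through the manifestly nonnegative term $\int p(\rho_t)\,\Delta g_R\,\rd x$, so it can be discarded without touching the two potential terms that produce the factor $4Z$.
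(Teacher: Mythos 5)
Your reduction to the momentum balance
$\partial_t(\rho_t\gu_t)+\nabla\cdot(\rho_t\gu_t\otimes\gu_t)=-\nabla
p(\rho_t)+\rho_t\gK$ is correct and is, in substance, the paper's own
proof in different clothing: the paper applies the operator
$W_R=\nabla g_R\cdot\nabla$ to \eqref{eq:tf}, multiplies by $\rho_t$,
integrates, and uses the continuity equation \eqref{eq:kon}; its terms
correspond one-to-one to yours (the boundary term killed by the
Schwarz inequality and the energy bound as in \eqref{gleichm}, the
$\mathrm{Hess}\,g_R$-term from the convective part, the pressure term
$\int p(\rho_t)\Delta g_R$, and the two force terms), with convexity
of $g_R$ playing the identical role. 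So the route is not genuinely
different; the issue is that one step of your argument, as stated,
fails -- and it is the step on which the whole theorem rests.

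You claim that after symmetrization the mean-field part of
$\langle\int\rho_t\nabla g_R\cdot\gK\rangle_T$ is ``nonnegative by
convexity of $g_R$'' and that this ``produces'' the bound $4Z$. It
cannot. Your inequality reads
$\limsup_{T\to\infty}\langle -Z M_R(\rho_t)+\mathrm{MF}_R(\rho_t)\rangle_T\le0$
with $M_R(\rho_t)=\int\rho_t(x)\langle x/R\rangle^{-2}\rd x$ as in
\eqref{22}; if all you know is $\mathrm{MF}_R\ge0$, then discarding it
yields only $\liminf_{T\to\infty} Z\langle M_R(\rho_t)\rangle_T\ge0$,
which is vacuous: a nonnegative term on the small side of ``$\le 0$''
carries no information, and no upper bound on the excess charge
follows from any convex weight. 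What is actually needed is the
quantitative lower bound of Lemma \ref{ll}, special to the choice
$g(r)=r-\arctg(r)$, which after scaling reads
$$\frac{(\nabla g_R(x)-\nabla g_R(y))\cdot(x-y)}{|x-y|^3}\ \ge\
\frac{1}{2\,\langle x/R\rangle^2\langle y/R\rangle^2},$$
so that the symmetrized repulsion satisfies
$\mathrm{MF}_R(\rho_t)\ge\tfrac14 M_R(\rho_t)^2$; one then needs
Jensen's inequality to pass from $\langle M_R(\rho_t)^2\rangle_T$ to
$\langle M_R(\rho_t)\rangle_T^2$, arriving at
$0\ge Z\langle M_R\rangle_\infty-\tfrac14\langle
M_R\rangle_\infty^2$, and only this quadratic-versus-linear
comparison forces $\langle M_R\rangle_\infty\le 4Z$. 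The constant $4$
is precisely the product of the factor $\tfrac12$ in \eqref{ll1} and
the symmetrization factor $\tfrac12$ (and the improvement to $2Z$ in
the radial case comes from the better constant in \eqref{ll2}, not
from symmetry alone). If your phrase ``exactly as there'' is read as
invoking Lemma \ref{ll} plus Jensen rather than mere convexity, the
rest of your argument -- vanishing boundary term, dropping the Hessian
and pressure terms, localization to $B$ and $R\to\infty$ -- goes
through and coincides with the paper's proof.
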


 We can interpret the result similarly to the Vlasov case: we say that
 a solution $(\varphi_t,\rho_t)$ fulfills the time dependent
 Thomas-Fermi equation \eqref{eq:tf} supplemented by \eqref{eq:kon} is
 a bound state of the Thomas-Fermi atom, if the solution
 $(\varphi_t,\rho_t)$ fulfills the following: for any $\epsilon>0$
 exists a radius $R$ such that for all times $t\geq0$
$$\int_{|x|>R} \rd x \rho_t(x) < \epsilon.$$
Thus, the theorem implies that the time-dependent Thomas-Fermi
equation has no bound state which has more than $4Z$ electrons. Again,
in the radially symmetric the constant reduces to $2Z$ as is obvious
from the proof.

\section{Uniform Estimates on Energies}

For the proof of our theorems we need some uniform estimate of the
kinetic energy. 

\subsection{Conservation of the total energy}
In this section we treat the general molecular case although not
needed in this generality for our result.
\subsubsection{The Vlasov Energy}
Suppose that $f_t$ is a weak solution of the Vlasov equation. Then it is
folklore that the energy
\begin{equation}\label{Energy}
  \cE_V(f_t) := \int_{\rz^3}\int_{\rz^3} \frac12 \xi^2f_t(x,\xi) \rd
  x\dbar\xi 
  -\int_{\rz^3}V(x)\rho_t(x) \rd x + D[\rho]+ R
\end{equation}
where
$$  D[\rho]:= \frac{1}{2}\int_{\rz^3}\!\!\int_{\rz^3}\frac{\rho_t(x)\rho_t(y)}{|x-y|} \rd x\rd y$$
is conserved, i.e., it is time independent.  Note that we added a
constant, namely the nuclear-nuclear repulsion
$$ R:= \sum_{0\leq k<l\leq K}{Z_kZ_l\over|\gR_k-\gR_l|}.$$

\subsubsection{The Thomas-Fermi Energy}
The time-dependent Thomas-Fermi energy is
\begin{equation}
  \label{eq:ttfe}
  \cH(\rho_t,\varphi_t):= \int_{\rz^3}\rd x{\rho_t(x)\over2}|\nabla\varphi_t(x)|^2 +\cE_\mathrm{TF}(\rho_t)
\end{equation}
where
$$\cE_\mathrm{TF}(\rho):=
:=\int_{\rz^3} \left(\frac3{10}\gamma_\mathrm{TF}\rho_t(x)^{5/3} - V(x)
\rho_t(x)\right) \rd x + D[\rho] + R.$$ The energy $\cH(\rho_t,\varphi_t)$ is
conserved along the trajectory of solutions $\varphi_t,\rho_t$.

\subsection{Lower Bound on the Energy}

We wish to show that the energy is bounded from below uniformly in
$f$. To this end we define the -- slightly non-standard -- spherical
symmetric rearrangement in the variable $\xi$
\begin{equation}
  \label{eq:umordnung}
  f^*(x,\xi) := q\chi_{B_{({6\pi^2\over q}\rho_t(x))^{1/3}}(0)}(\xi)
\end{equation}
where  $\rho(x):=\int_{\rz^3}\dbar\xi f(x,\xi)$; note
that also $\rho(x)=\int\dbar\xi f^*(x,\xi)$. Thus, obviously
$$T_V(f):=\frac12\int\rd x \int \dbar\xi \xi^2 f(x,\xi) \geq \frac12\int\rd x \int \dbar\xi \xi^2 f^*(x,\xi) 
= \frac3{10} \gamma_\mathrm{TF} \int\rd x \rho^{5/3}(x).$$
Thus, the Vlasov energy is bounded from below by the
Thomas-Fermi energy $\cE_\mathrm{TF}$
\begin{equation}
 \label{eq:TH}
  \cE_V(f_t) \geq \cE_\mathrm{TF}(\rho_t).
\end{equation}
This in turn is  bounded from below
by 
\begin{equation}
\label{TF}
\cE_\mathrm{TF}(\rho_t) \geq \alpha \sum_{k=1}^K Z_k^{7/3}.
\end{equation}
Note that this bound is uniform in the density $\rho$, therefore in
particular uniform in the electron number, and uniform in the
positions of the nuclei. (The proof uses Teller's lemma (Teller
\cite{Teller1962}, Lieb and Simon \cite{LiebSimon1977}), the scaling
of the minimum, and the fact that the excess charge of the
Thomas-Fermi functional vanishes.)  Here
$$\alpha := \inf\left\{ \int_{\rz^3} \left(\frac3{10}\gamma_\mathrm{TF}\rho(x)^{5/3} 
  - {\rho(x)\over|x|} \right)\rd x + D[\rho]\Big| \rho\geq0,\ \rho\in L^{5/3}(\rz^3),\
  D[\rho]<\infty\right\}.$$
In other words, we have shown stability of matter for the Vlasov
functional, i.e., whatever the initial conditions are, the energy is
bounded from below by a quantity that decreases at most linearly in
the number of involved atoms.

It is obvious that the analogous bound holds for the time-dependent
Thomas-Fermi theory, since by definition
$$ \cH(\rho,\varphi)\geq \cE_\mathrm{TF}(\rho).$$

\subsection{Upper Bounds on Norms Along the Trajectory of the Solution}

We show that the kinetic energy $T_{V}(f_t)$ and the Coulomb norm
$\|f_t\|_C:= \sqrt{D[\rho_{f_t}]}$ is uniformly bounded along the trajectory.

By the above $E(0):=\cE_V(f_0)=\cE_V(f_t)=:E(t)$, i.e.,
\begin{equation}
  \label{gleichmaessig}
  \begin{split}
    &\frac12(T_{V}(f_t) +\|f_t\|_C^2)\\
    \leq &E(0)+R - \frac12 \left(T_V(f_t) 
      +  \|f_t\|_C^2-\sum_{k=1}^K\int_{\rz^3}{2Z_k\rho_t(x)\over|x-R_k|}+4R\right)\\
    \leq &E(0) -\frac\alpha2\sum_{k=1}^K(2Z_k)^{7/3}
  \end{split}
\end{equation}
where we use the uniform lower bound \eqref{eq:TH} and \eqref{TF} on
the total energy.  In other words, both the kinetic energy $T_{f_t}$
and Coulomb norm $\|f_t\|_C$ are bounded along the trajectory
uniformly in time.

Again, the TF-case is similar.

\section{Proof of the Main Results}

\subsection{The Vlasov Case}
\begin{proof}[Proof of Theorem \ref{maxion}]
  First we note that -- of course -- $w \notin C_0^\infty(\rz^6)$.
  Strictly speaking we need to regularize the $w$ at the spatial
  origin and smoothly cutoff at infinity obtaining a weight
  $w_\epsilon$ which converges toward $w$ as $\epsilon\to0$. We have
  carried such procedure through in \cite{ChenSiedentop2017}. Since
  this is standard and only obscures the argument, we skip it.

  We do the previously announced: multiplication by $w_R$ defined in
  \eqref{test}, integration over phase space, and averaging in
  time. We get for the three summands $A$, $B$, and $C$ of the Vlasov
  equation:

  Summand \textbf{A}:
  \begin{align}
    |A| :=& \left|{1\over T}\int_0^T\rd t \int_{\rz^3}\rd x\int_{\rz^3}\dbar\xi\ w_R(x,\xi) \partial_tf_t(x,\xi)\right|\nonumber\\
    = &  \left|{1\over T}\int_{\rz^3}\rd x\int_{\rz^3}\dbar\xi\ w_R(x,\xi) f_T(x,\xi) 
        - \frac1T\int_{\rz^3}\rd x\int_{\rz^3} \dbar\xi\ w_R(x,\xi) f_0(x,\xi)\right| \label{A3}\\
    \leq& {1\over T}\left[\sqrt{T_{f_T}} \sqrt{\int_{\rz^3}\rd x |\nabla g_R(x)|^2f_T(x,\xi)}+ \sqrt{T_{f_0}}\sqrt{\int_{\rz^3}\rd x |\nabla g_R(x)|^2f_0(x,\xi)}\right]\label{A4}\\
    \leq& \const{ N^{1/2}R^2\over  T} \to 0\ \text{as}\ T\to\infty
  \end{align}
  where we used the Schwarz inequality to conclude line \eqref{A4}
  from line \eqref{A3}.

  \textbf{B}: First we mention that $g$ is a convex monotone increasing
  function which implies convexity of $g_R$, i.e., $\mathrm{Hess}(g_R)$
  is positive. Now, we integrate by parts
  \begin{align*}
    B := &{1\over T}\int_0^T\rd t \int_{\rz^3}\rd x \int_{\rz^3}\dbar
    \xi\
    \nabla g_R(x)\cdot\xi \xi\cdot\nabla_xf_t \\
    = &-{1\over T}\int_0^T\rd t \int_{\rz^3}\rd x 
    \int_{\rz^3}\dbar\xi\ \xi\cdot \mathrm{Hess}(g_R)(x)\xi\ f_t(x,\xi)\leq 0
  \end{align*}
  using the positivity of the Hessian in the last step.

  Eventually \textbf{C}:
  \begin{align}
    C:=& {1\over T}\int_0^T\rd t \int_{\rz^3}\rd x \int_{\rz^3}\dbar\xi
         \nabla g_R(x) \cdot\xi\ \gK\cdot\nabla_\xi f_t\\
    =& {1\over T}\int_0^T\rd t \int_{\rz^3}\rd x \int_{\rz^3}\dbar\xi\
       \nabla g_R(x)\cdot\xi \left(-Z{x\over|x|^3} + \int_{\rz^3}\rd y {x-y\over|x-y|^3}\rho_t(y)\right)
       \cdot\nabla_\xi f_t\\
    = & {1\over T}\int_0^T\rd t \left(\int_{\rz^3}\rd x Z{\nabla g_R(x)\cdot x\over|x|^3}\rho_t(x)\right. \\
       & \left.- \frac12\int_{\rz^3}\rd x \int_{\rz^3}\rd y {(\nabla g_R(x)-\nabla g_R(y))\cdot(x-y)\over|x-y|^3}\rho_t(x)\rho_t(y)\right)\\
    \label{22}
    \leq &  \frac1{T}\int_0^T\rd t\left(Z\underbrace{\int_{\rz^3}\rd x {\rho_t(x)\over \langle x/R\rangle^2}}_{=: M_R(\rho_t)}
           - \frac1{4}\int_{\rz^3}\rd x\int_{\rz^3}\rd y{\rho_t(x)\rho_t(y)\over \langle x/R\rangle^2\langle y/R\rangle^2)}\right)
  \end{align}
  where we used Lenzmann's and Lewin's Lemma \ref{ll} and the notation
  $\langle x\rangle:=\sqrt{1+|x|^2}$ in the last step. Thus, for fixed
  $R$

  $$0= A+B+C \leq \const {N^{1/2}R^2\over T} + \langle ZM_R(\rho_t)\rangle_T-\frac14\langle M_R(\rho_t))\rangle_T^2$$
  where we used Jensen's inequality to estimate the last term. Thus
  the temporal average up to $T\in[1,\infty]$,
  $\langle M_R(\rho_t)\rangle_T$, is uniformly bounded. Thus, as
  $T\to\infty$,
  \begin{equation}
    \label{punch}
    0= A+B+C\leq Z \langle M_R(\rho_t)\rangle_\infty -
    \frac14\langle M_R(\rho_t)\rangle_\infty^2
  \end{equation}
  where we set
  $\langle M_R(\rho_t)\rangle_\infty:=
  \limsup_{T\to\infty}T^{-1}\int_0^T\rd t \
  M_R(\rho_t)$. Furthermore, assume that $B$ is contained in the ball of
  radius $D\in\rz_+$ around the origin. Then we have
  \begin{align}
    &4Z\geq \langle M_R(\rho_t)\rangle_\infty\\
    =&\limsup_{T\to\infty}\frac1T\int_0^T\rd t \int_B\rd x{\rho_t(x)\over1+(x/R)^2}
      \geq \frac1T\limsup_{T\to\infty}\int_0^T\rd t \int_B\rd x{\rho_t(x)\over1+(D/R)^2}\\
    =& {1\over1+(D/R)^2}\frac1T\limsup_{T\to\infty}\int_0^T\rd t \int_B\rd x\rho_t(x)
  \end{align}
  Taking $R\to\infty$ on both sides gives the desired result
  $$4Z\geq
  \limsup_{T\to\infty}\frac1T\int_0^T\rd t \int_B\rd x \rho_t(x).$$
\end{proof}

\subsection{The Thomas-Fermi Case}
We now give the proof of the Thomas-Fermi case which initially
requires a new idea but towards the end is similar to the above proof.
\begin{proof}[Proof of Theorem \ref{th:tf}]
  We modify our strategy slightly: instead of multiplying
  \eqref{eq:tf} by the function $w_R$ we multiply it from the left by the operator
\begin{equation}
  \label{eq:W_R}
  W_R:=\nabla g_R \cdot \nabla,
\end{equation}
multiply by $\rho$, integrate in the space variable, and average in
time. The left side of \eqref{eq:tf} becomes
\begin{multline}
  \label{1}
  L_T:=\frac1T\int_0^T\rd t \int_{\rz^3}\rd x\rho_t\nabla g_R\cdot \nabla\partial_t\varphi_t\\
  = \frac1T\int_0^T\rd t \partial_t\int_{\rz^3}\rd x\rho_t\nabla g_R\cdot \nabla\varphi_t -\frac1T\int_0^T\rd t \int_{\rz^3}\rd x\partial_t\rho_t\nabla g_R\cdot \nabla\varphi_t.
\end{multline}
Since by the Schwarz inequality
\begin{multline}
  \label{gleichm}
  \left|\int_{\rz^3}\rho_T(x)\nabla g_R(x) \nabla\varphi_T(x)\rd x\right|\leq
  \|\nabla g_R\| \sqrt{\int_{\rz^3}\rho_T |\nabla\varphi_T|^2 \int_{\rz^3}\rho_T}\leq \const_R,
\end{multline}
we see that it is uniformly bounded in $T$ because of the analogue of
\eqref{gleichmaessig} for the time-dependent Thomas-Fermi equation
and the fact that the particle number is conserved in time. Thus, by
the continuity equation
\begin{multline}
  \label{2}
  \limsup_{T\to\infty}L_T = -\sum_{\mu,\nu=1}^3\limsup_{T\to\infty}\frac1T\int_0^T\rd t
  \int_{\rz^3}\rd x\partial_\mu(\rho_t\partial_\mu\varphi_t)
  \partial_\nu g_R \partial_\nu\varphi_t\\
  =\left\langle\sum_{\mu,\nu=1}^3
  \int_{\rz^3}\rd x\rho_t\partial_\mu\varphi_t
  \partial^2_{\mu,\nu} g_R \partial_\nu\varphi_t+\sum_{\mu,\nu=1}^3
  \int_{\rz^3}\rd x\rho_t\partial_\mu\varphi_t
  \partial_\nu g_R \partial^2_{\mu,\nu}\varphi_t\right\rangle_\infty\\
\geq \left\langle \sum_{\mu,\nu=1}^3
  \int_{\rz^3}\rd x\rho_t\partial_\mu\varphi_t
  \partial_\nu g_R \partial^2_{\mu,\nu}\varphi_t\right\rangle_\infty
\end{multline}
using integration by parts in the second but last step and the
positivity of $\mathrm{Hess}g_R$.

Next we treat the corresponding four resulting summands $R_1$ through
$R_4$ of the right hand side of \eqref{eq:tf}:
\begin{equation}
  \label{r1}
  R_1:= \left\langle \int_{\rz^3}\rd x\rho_t\nabla g_R\cdot \nabla\frac12(\nabla\varphi_t)^2\right\rangle_\infty=\left\langle\sum_{\mu,\nu=1}^3
  \int_{\rz^3}\rd x\rho_t\partial_\mu\varphi_t
  \partial_\nu g_R \partial^2_{\mu,\nu}\varphi_t\right\rangle_\infty
\end{equation}
which is identical to the last summand of the left side $\eqref{2}$.
\begin{multline}
  \label{r2}
  R_2:= \left\langle\int_{\rz^3}\rd x \rho_t\nabla g_R\cdot \nabla\frac{\gtf}{2} \rho_t^{2/3}\right\rangle_\infty\\
  = \frac15\gtf\left\langle\int_{\rz^3}\rd x \nabla \rho_t^{5/3}\cdot\nabla g_R\right\rangle_\infty = -\frac15\gtf \left\langle\int_{\rz^3}\rd x \rho_t^{5/3}\Delta g_R\right\rangle_\infty \leq 0
\end{multline}
again because of the positivity of $\mathrm{Hess}g_R$ and therefore of
$\Delta g_R$.
\begin{multline}
  \label{r3}
  R_3:= -\left\langle\int_{\rz^3}\rd x \rho_t\nabla g_R\cdot \nabla
    {Z\over|x|}\right\rangle_\infty = \left\langle\int_{\rz^3}\rd x
    \rho_t(x)
    {Z\nabla g_R(x)\cdot x\over|x|^3}\right\rangle_\infty\\
  = Z\left\langle \int_{\rz^3}\rd x {\rho_t(x)\over\langle
      x/R\rangle^2}\right\rangle_\infty =Z\left\langle
    M_R(\rho_t)\right\rangle_\infty
\end{multline}
using the notation of \eqref{22}.  Finally, the last summand in
\eqref{eq:tf} yields
\begin{equation}
  \label{r4}
  \begin{split}
    R_4:=& -\left\langle\int_{\rz^3}\rd x \int_{\rz^3} d y \nabla g_R(x) \rho_t(x)\rho_t(y){x-y\over|x-y|^3}\right\rangle_\infty\\
    =&-\frac12\left\langle\int_{\rz^3}\rd x \int_{\rz^3} d y
      \rho_t(x)\rho_t(y)
      {(\nabla g_R(x)-\nabla g_R(y))\cdot(x-y)\over|x-y|^3}\right\rangle_\infty\\
    \leq& -\frac1{4}\int_{\rz^3}\rd x\int_{\rz^3}\rd
    y{\rho_t(x)\rho_t(y)\over \langle x/R\rangle^2\langle
      y/R\rangle^2)}=-\frac14\left\langle M_R(\rho_t)\right\rangle^2_\infty
  \end{split}
\end{equation}
by \eqref{ll1}.

Putting the term obtained from the left side together with all the terms
obtained from the right side yields
\begin{equation}
  \label{eq:punch2}
  0\geq Z\left\langle M_R(\rho_t)\right\rangle_\infty
  - \frac14 \left\langle M_R(\rho_t)\right\rangle_\infty^2.
\end{equation}
The rest of the proof is now a mere copying of the Vlasov case.
\end{proof}

\appendix
\section{A Useful Inequality\label{a1}}

\begin{lemma}[Lenzmann and Lewin \cite{LenzmannLewin2013}]\label{ll}
For $g(r)=r-\arctg(r)$ we have
\begin{itemize}
\item For $x\neq y$ we have
  \begin{equation}
    \label{ll1}
    {(g'(|x|)\omega_x-g'(|y|)\omega_y)\cdot(x-y)\over|x-y|^3}\geq\frac12
    {g'(|x|)\over|x|^2}{g'(|y|)\over|y|^2}
  \end{equation}
\item Averaging over unit spheres yields
  \begin{equation}
    \label{ll2}
    \int_{\mathbb{S}^2}{\rd\omega_x\over 4\pi}\int_{\mathbb{S}^2}{\rd\omega_y\over4\pi}
      {(g'(|x|)\omega_x-g'(|y|)\omega_y)\cdot(x-y)\over|x-y|^3}\geq
      {g'(|x|)\over|x|^2}{g'(|y|)\over|y|^2}.
    \end{equation}
\end{itemize}
\end{lemma}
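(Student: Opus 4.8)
The plan is to treat the two inequalities separately, since — as the gap between the constants $\tfrac12$ and $1$ already signals — the averaged bound \eqref{ll2} is genuinely stronger than anything one obtains by integrating \eqref{ll1}, and so it needs a different argument. Throughout I would use the explicit form $g'(r)=r^2/(1+r^2)$, so that $g'(r)/r^2=1/(1+r^2)=\langle r\rangle^{-2}$; this identifies the two right-hand sides as $\tfrac12\langle x\rangle^{-2}\langle y\rangle^{-2}$ and $\langle x\rangle^{-2}\langle y\rangle^{-2}$, and shows at once that $g'$ is increasing and $g$ convex.

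For \eqref{ll1} I would reduce to a one-dimensional problem. Writing $a=|x|$, $b=|y|$ and $c=\omega_x\cdot\omega_y$, and using $\omega_x\cdot(x-y)=a-bc$ and $\omega_y\cdot(x-y)=ac-b$, the left-hand side becomes
\begin{equation*}
\Phi(c)=\frac{A-Bc}{(a^2+b^2-2abc)^{3/2}},\qquad A=ag'(a)+bg'(b),\quad B=bg'(a)+ag'(b).
\end{equation*}
The key observation is that $\Phi$ is \emph{unimodal}: computing $\Phi'(c)$, its numerator is an affine function of $c$ with \emph{negative} slope (the coefficient of $c$ being $-abB<0$), so $\Phi$ has at most one interior critical point and that point is a maximum. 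Hence the minimum of $\Phi$ over $c\in[-1,1]$ is attained at an endpoint, and it suffices to check $c=\pm1$. At $c=1$ one gets $\Phi(1)=|g'(a)-g'(b)|/(a-b)^2$, and since $g'(a)-g'(b)=(a^2-b^2)\langle x\rangle^{-2}\langle y\rangle^{-2}$ the claim reduces to $(a+b)/|a-b|\ge\tfrac12$, which is immediate. At $c=-1$ one gets $\Phi(-1)=(g'(a)+g'(b))/(a+b)^2$, and with $u=\langle x\rangle^{-2}$, $v=\langle y\rangle^{-2}$ the required inequality reads $2(a^2u+b^2v)\ge uv(a+b)^2$; assuming without loss of generality $a\ge b$ (so $v\ge u$) and using $u,v\le1$, this follows from $a^2+b^2\ge2ab$.

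For the averaged inequality \eqref{ll2} I would \emph{not} integrate \eqref{ll1}, which would only recover the constant $\tfrac12$. Instead I would invoke Newton's theorem. Writing $\tfrac{x-y}{|x-y|^3}=-\nabla_x\tfrac1{|x-y|}$ and averaging over one sphere at a time gives $\int\tfrac{\rd\omega_y}{4\pi}\tfrac1{|x-y|}=1/\max(a,b)$, so that $\int\tfrac{\rd\omega_y}{4\pi}\tfrac{x-y}{|x-y|^3}$ equals $\omega_x/a^2$ when $a>b$ and vanishes when $a<b$; the analogous statement holds for the $\omega_y$-average of the second term. Carrying this through collapses the double spherical average to
\begin{equation*}
\frac{g'(a)}{a^2}\,\chi_{\{a>b\}}+\frac{g'(b)}{b^2}\,\chi_{\{b>a\}}=\min\!\big(\langle x\rangle^{-2},\langle y\rangle^{-2}\big),
\end{equation*}
and since $\min(u,v)\ge uv$ for $u,v\in(0,1]$ this is $\ge\langle x\rangle^{-2}\langle y\rangle^{-2}$, which is \eqref{ll2} with its sharp constant.

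I expect the main obstacle to be the reduction step for \eqref{ll1}: one must recognize that the $c$-dependence is unimodal with an interior \emph{maximum}, so that the minimum is pushed to the endpoints $c=\pm1$, rather than attempting to minimize $\Phi$ directly. The endpoint inequalities are then elementary, and the averaged case — once one commits to Newton's theorem instead of integrating the pointwise bound — is short.
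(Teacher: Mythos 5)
Your proof is correct, and it is worth noting that the paper itself does not prove this lemma at all: it is stated in the appendix and quoted from Lenzmann and Lewin \cite{LenzmannLewin2013}, so your argument is a genuine self-contained substitute rather than a variant of anything in the text. I checked the computations: with $a=|x|$, $b=|y|$, $c=\omega_x\cdot\omega_y$ one indeed has $\Phi(c)=(A-Bc)(a^2+b^2-2abc)^{-3/2}$ and
$\Phi'(c)=(a^2+b^2-2abc)^{-5/2}\bigl(3abA-B(a^2+b^2)-abBc\bigr)$,
whose numerator is affine and strictly decreasing in $c$ (for $a,b>0$ one has $B>0$), so $\Phi$ can only increase and then decrease, and its minimum on $[-1,1]$ is attained at $c=\pm1$. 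The endpoint values are, as you say, $\Phi(1)=\tfrac{a+b}{|a-b|}\langle a\rangle^{-2}\langle b\rangle^{-2}\geq\langle a\rangle^{-2}\langle b\rangle^{-2}$ and, with $u=\langle a\rangle^{-2}$, $v=\langle b\rangle^{-2}$, $\Phi(-1)=(a^2u+b^2v)/(a+b)^2$, for which $uv(a+b)^2\leq 2uv(a^2+b^2)\leq 2(a^2u+b^2v)$ settles \eqref{ll1}; the constant $\tfrac12$ is even seen to be sharp as $a=b\to0$, $c=-1$. Your treatment of \eqref{ll2} is also the right one: since the right side of \eqref{ll1} is angle-independent, integrating the pointwise bound can never beat $\tfrac12$, and Newton's theorem collapsing the double spherical average to $\min(u,v)\geq uv$ is exactly the mechanism that produces the constant $1$ (and is, in spirit, how the cited source obtains it). Two small points to make explicit in a final write-up: first, the lemma tacitly assumes $x,y\neq0$, since $\omega_x,\omega_y$ are otherwise undefined; second, the differentiation of $1/\max(|x|,b)$ behind your indicator formula fails on the set $|x|=|y|$, where that formula would wrongly give $0$ --- there one either argues by continuity or computes directly that the double average equals $g'(a)a^{-2}$ times the spherical average of $|\omega_x-\omega_y|^{-1}$, which is $1$ by Newton's theorem again, so the value $\min(u,v)$ and the inequality persist.
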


Note also the related inequality (Lenzmann and Lewin \cite[Lemma
3]{LenzmannLewin2013} for $\nu=3$) and Chen and Siedentop
\cite{ChenSiedentop2017} for general $\nu$) for $x\neq y\in\rz^\nu$
\begin{equation}%[Chen Siedentop \cite{ChenSiedentop2017}]
 {(|x|^{\nu-1}\omega_x-|y|^{\nu-1}\omega_y)\cdot(x-y)\over|x-y|^\nu}\geq 2^{2-\nu}.
\end{equation}

{\small \textit{Acknowledgment:} We acknowledge support by the
  Deutsche Forschungsgemeinschaft through the grants CH 955/4-1 and SI
  348/15-1.}

\def\cprime{$'$}

\end{document}